\theoremstyle{definition}
\newtheorem{theorem}{Theorem}[section]
\newtheorem{proposition}[theorem]{Proposition}
\newtheorem{remark}{Remark}[section]
  \providecommand\BibTeX{{%
    \normalfont B\kern-0.5em{\scshape i\kern-0.25em b}\kern-0.8em\TeX}}}
\begin{document}

\author{Shi Yu}
\authornotemark[1]
\author{Haoran Wang}
\affiliation{%
  \institution{The Vanguard Group}
  \city{Malvern}
  \country{USA}
}
\email{shi_yu,haoran_wang@vanguard.com}

\author{Chaosheng Dong}
\affiliation{%
  \institution{Amazon}
  \city{Seattle}
  \country{USA}}
\email{chaosd@amazon.com}

\title{Learning Risk Preferences from Investment Portfolios Using Inverse Optimization}

\begin{abstract}
The fundamental principle in Modern Portfolio Theory (MPT) is based on the quantification of the portfolio's risk related to performance. Although MPT has made huge impacts to the investment world and prompted the success and prevalence of passive investing, it still has shortcomings in real-world applications. One of the main challenges is that the level of risk an investor can endure, known as \emph{risk-preference}, is a subjective choice that is tightly related to psychology and behavioral science in decision making. This paper presents a novel approach of measuring risk preference from existing portfolios using inverse optimization on mean-variance portfolio allocation framework. Our approach allows the learner to continuously estimate real-time risk preferences using concurrent observed portfolios and market price data. We demonstrate our methods on robotic investment portfolios and real market data that consists of 20 years of asset pricing and 10 years of mutual fund portfolio holdings. Moreover, the quantified risk preference parameters are validated with two well-known risk measurements currently applied in the field. The proposed methods could lead to practical and fruitful innovations in automated/personalized portfolio management, such as Robo-advising, to augment financial advisors' decision intelligence in a long-term investment horizon. 

\end{abstract}

\keywords{Inverse Optimization, Risk Preference, Portfolio Construction}

\maketitle

\section{Introduction}

	 In portfolio allocation, one primary goal has been to reconcile empirical information about securities prices with theoretical models of asset pricing under conditions of inter-temporal uncertainty \citep{Cohn75}. The notion of risk preference has been an essential assumption underlying almost all such models \citep{Cohn75}. Its underlying biological, behavioral, and social factors are commonly studied in many other disciplines, i.e., social science \citep{Payne17,guiso2008risk},  behavior science \citep{brennan2011origin}, mathematics \citep{vonneumann1947}, psychology \citep{SokolHessner2009ThinkingLA,Mcgraw10}, and genetics \citep{Linn19}.  For more than half of a century, many measures of risk preference have been developed in various fields, including curvature measures of utility functions \citep{arrow1971theory,Pratt64}, human subject experiments and surveys \citep{rabin2001anomalies,holt2002risk}, portfolio choice for financial investors \citep{guiso2008risk}, labor-supply behavior \citep{chetty2006new}, deductible choices in insurance contracts \citep{cohen2007estimating,szpiro1986relative}, contestant behavior on game shows \citep{post2008deal}, option prices \citep{ait2000nonparametric} and auction behavior \citep{lu2008estimating}. Nowadays, investor-consumers' risk preferences are mainly investigated through one or combination of three ways. The first one is assessing actual behavior, such as inferring households' risk attitudes using regression analysis on historical financial data \citep{Schooley1996RiskAM} and inferring investors' risk preferences from their trading decision using reinforcement learning \citep{HumoudRLRisk}. The second one is assessing responses to hypothetical scenarios about investment choices using online questionnaries (see \citet{barsky1997preference}, \citep{hey1999estimating} and \citep{HumoudRLRisk}. The third one is subjective questions (see \citet{hanna1998theory} for a survey of these different techniques).
	 
	 Despite its profound importance in economics, there remain some limitations with respect to learning the risk preference using classic approaches. First, most practices in place are often insufficient to deal with scenarios where investment decisions are managed by machine learning processes (Robo-advising), and risk preferences are expected as input parameters that can generate new decisions directly (auto-rebalancing). Current Robo-advising process first communicate and categorize clients' risk preferences based on human interpretation, and later map them to the nearest value of a finite set of representative risk preference levels \cite{Capponi2019PersonalizedRE}. This limitation makes existing theories and approaches challenging to deal with prominent situations in which risk preference changes dramatically in an inter-temporal dimension, such as savings, investment, consumption problems, dynamic labor supply decisions, and health decisions \citep{Donoghue18}. Second, real-world risk preference is clearly not as straightforward as many theories have assumed, and perhaps individuals even do not exhibit risk appetite consistently in their behaviors across domains \citep{Donoghue18}. Namely, risk preference is time varying in reality. Nowadays, due to the technological advances, we already have overwhelming behavioral data across all domains, which provide a myriad of additional sources to help us decipher the perplexity of risk preference from different angles. New approaches can be used in conjunction with traditional methods with additional, more nuanced implications that are borne out by data \citep{Donoghue18}. 

	 To tackle aforementioned limitations, we present a novel inverse optimization approach to measure risk preference directly from market signals and portfolios. Our approach is developed based on two fundamental methodologies: convex optimization based Modern Portfolio Theory (MPT) and learning decision-making scheme through inverse optimization. We assume investors are rational and their portfolio decisions are near-optimal. Consequently, their decisions are affected by the risk preference factors through portfolio allocation model. We propose an inverse optimization framework to infer the risk preference factor that must have been in place for those decisions to have been made. Furthermore, we assume risk preference stays constant at the point of decision, but varies across multiple decisions over time, and can be inferred from joint observations of time-series market price and asset allocations. 
	
	 
	 \textbf{Our contributions} We summarize the major contributions of our paper as follows:
	 \begin{itemize}
	     \item To the best of our knowledge, we propose the first inverse optimization approach for learning time varying risk preference parameters of the mean-variance portfolio allocation model, based on a set of observed mutual fund portfolios and underlying asset price data. Furthermore, the flexibility of our approach enables us to move beyond mean-variance and adopt more general risk metrics. 

	     \item We demonstrate our approach in two real-world case studies. First, we generate 20 months of robotic portfolios using an in-house deep reinforcement learning platform. Then, we apply the proposed risk preference algorithm on these robotic portfolios, and the learned risks are consistent with input parameters that drive the robotic investment process.  Next, the proposed method is applied on 20 years of market data and 10 years of mutual fund quarterly holding data. Results show that the proposed method is able to handle learning task that consists of hundreds of assets in portfolio. For portfolios composed by more than thousands of assets, we propose Sector-based and Factor-base projection to improve the computational efficiency. 
	     
	     \item In particular, to the best of our knowledge, it is the first time that inverse optimization approach is formally proposed for portfolio risk learning. It is also the first time robotic portfolios are used to validate the effectiveness of machine learning methods.


	 \end{itemize}
	 
\section{Related Work}

	 The fundamental MPT developed by \citet{markowitz1952portfolio} and its variants assume that investors estimate the risk of the portfolio according to the variability of the expected return. Moreover, \citet{markowitz1952portfolio} assumes that investors make decisions solely based on the preferences of two objectives: the expected return and the risk. The trade-off between the two objectives is typically denoted by a positive coefficient and referred to as \emph{risk tolerance} (or \emph{risk aversion}). Later, \citet{black1992} extends the framework in \citet{markowitz1952portfolio} by blending investors' private expectations, known as Black-Litterman (BL) model. A Bayesian statistical interpretation of BL model is proposed in \citet{He2002TheIB} and an inverse optimization perspective is derived in \citet{bertsimas2012inverse}. Most of these mean-variance based approaches assume an investor's risk preference is known. 
	 
	
	Our work is related to the inverse optimization problem (IOP), in which the learner seeks to
    infer the missing information of the underlying decision model from observed data, assuming that the decision maker is rationally making decision \citep{ahuja2001inverse}.  IOP has been extensively investigated in the operations research community from various aspects \citep{ahuja2001inverse,iyengar2005inverse,Schaefer2009,wang2009cutting,keshavarz2011imputing,chan2014generalized,bertsimas2015data,aswani2016inverse,esfahani2017data,birge2017inverse,barmann2017emulating,chan2018trade,dong2018inferring,dong2018ioponline,chan2019inverse,chan2020inverse,dong2020imopicml}.
    Due to the time varying nature of risk preferences, our work particularly takes the online learning framework in \citet{dong2018ioponline}, which develops an online learning algorithm to infer the utility function or constraints of a decision making problem from sequentially arrived observations of (signal, decision) pairs. This approach makes few assumptions and is generally applicable to learn the underlying decision making problem that has a convex structure. \citet{dong2018ioponline} provide the regret bound and  shows the statistical consistency of the algorithm, which guarantees that the algorithm will asymptotically achieves the best prediction error permitted by the underlying inverse model.
	
	Also related to our work is \citet{bertsimas2012inverse}, in which creates a novel reformulation of the BL framework by using techniques from inverse optimization. There are two main differences between \citet{bertsimas2012inverse} and our paper. First, the problems we study are essentially different.   \citet{bertsimas2012inverse} seek to reformulate the BL model while we focus on learning specifically the investor's risk preferences. Second, \citet{bertsimas2012inverse} consider a deterministic setting in which the parameters of the BL model are fixed and uses only one portfolio as the input. In contrast, we believe that investor's risk preferences are time varying and propose an inverse optimization approach to learn them in an online setting with as many historical portfolios as possible. Such a data-driven approach enables the learner to better capture the time-varying nature of risk preferences and better leverage the power and opportunities offered by `Big data'.
    
    Recently, we note that researchers in reinforcement learning (RL) propose an exploration-exploitation algorithm to learn the investor's risk appetite over time by observing her portfolio choices in different market environments \citep{HumoudRLRisk}. The method is explicitly designed for Robo-advisors. In each period, the Robo-advisor places an investor's capital into one of several pre-constructed portfolios, where each portfolio decision reflects the Robo-advisor's belief of the investor's risk preference. The investor interacts with the Robo-advisor by portfolio selection choices, and such interactions are used to update the Robo-advisor's estimations about the investor's risk profile. The investor's decision function in \citet{HumoudRLRisk} is a simple voting model among different pre-constructed candidate portfolios, and the portfolio allocation process is separate from that decision model. In contrast, we use the portfolio allocation model directly as the decision making model, and IOP allows us to infer risk preference directly from portfolios.

	\section{Problem Setting}

	\subsection{Portfolio Optimization Problem}
	 We consider the Markowitz mean-variance portfolio optimization problem \citep{markowitz1952portfolio}:
	\begin{align}
	\label{mean-variance portfolio}
	\tag*{PO}
    \begin{array}{llll}
         \min\limits_{\mathbf{x} \in \mathbb{R}^{n}} &  \frac{1}{2} \mathbf{x}^{T} Q \mathbf{x} - r \mathbf{c}^{T} \mathbf{x}  \\
    	\;s.t. &  A \mathbf{x} \geq  \mathbf{b},
    \end{array}
	\end{align}
	 where $Q\in\mathbb{R}^{n \times n} \succeq 0 $ is the positive semi-definite covariance matrix of the asset level portfolio returns, $\mathbf{x}$ is portfolio allocation where each element $x_{i}$ is the holding weight of asset $i$ in portfolio, $\mathbf{c} \in \mathbb{R}^{n}$ is a vector of the expected asset level portfolio returns, $r > 0$ is the \emph{risk-tolerance} factor, $A \in \mathbb{R}^{m \times n} (n \leq m)$ is the structured constraint matrix, and $\mathbf{b}\in \mathbb{R}^{m}$ is the corresponding right-hand side in the constraints. 
	 
 In this paper, we have $x_{i} \geq 0$ for each $i \in [n]$  as no shorting is considered. In general, $\mathbf{x}$ represents the portfolio optimized for $n$ assets. In \ref{mean-variance portfolio}, the coefficient $r$ is assigned to the linear term, and thus it represents \emph{risk-tolerance} and larger $r$ indicates more preferable to profit. In \ref{mean-variance portfolio}, if variables $Q, r, \mathbf{c}, A, \mathbf{b}$ are given, the optimal solution $\mathbf{x}^{*}$ can be obtained efficiently via convex optimization. In financial investment, the process is known as finding the optimal portfolio allocations, and we call it the \emph{Forward Problem}.
	 
	
	\subsection{Inverse Optimization through Online Learning}	  
	
	 Now we consider a reverse scenario, assuming we can somehow observe a sequence of optimized portfolios while some parameters in \ref{mean-variance portfolio} such as $r$ are unknown. The problem then becomes learning the hidden decision variables that control the portfolio optimization process. Such type of problem has been systematically investigated in \citet{dong2018ioponline} in an online setting. Formally, consider the family of parameterized decision making problem 
	\begin{align}
	\label{dmp}
	\tag*{DMP}
	\begin{array}{llll}
	     \min\limits_{\mathbf{x} \in \mathbb{R}^{n}} & f(\mathbf{x}, u; \theta) \\
	     \; s.t.  & \mathbf{g}(\mathbf{x}, u; \theta) \leq \mathbf{0}. 
	\end{array}
	\end{align}
	
	 We denote $X(u;\theta)=\{x\in\mathbb{R}^{n} : \mathbf{g}(\mathbf{x}, u; \theta) \leq \mathbf{0}\}$ the feasible region of \ref{dmp}. We let
	\begin{align*}
	    S(u;\theta)=\arg\min\{f(\mathbf{x},u,\theta): \mathbf{x} \in X(u,\theta)\}
	\end{align*}
	 be the optimal solution set of \ref{dmp}.
	 
	Consider a learner who monitors the signal $ u \in \mathcal{U} $ and the decision maker' decision $ \mathbf{x} \in X(u,\theta) $ in response to $ u $. Assume that the learner does not know the decision maker's utility function or constraints in \ref{dmp}. Since the observed decision might carry measurement error or is generated with a bounded rationality of the decision maker, we denote $ \mathbf{y} $ the observed noisy decision for $ u \in \mathcal{U} $. 
	
	In the inverse optimization problem, the learner aims to learn the parameter $ \theta $ of \ref{dmp} from (signal, noisy decision) pairs. In online setting, the (signal, noisy decision) pair, i.e., ($u,\mathbf{y}$), becomes available to the learner one by one. Hence, the learning algorithm produces a sequence of hypotheses  $(\theta_{0},\ldots,\theta_{T})$. Here, $T$ is the total number of rounds, $\theta_{0}$ is an initial hypothesis, and $\theta_{t}$ for $t \geq 2$ is the hypothesis chosen after observing the $t$-th (signal,noisy decision) pair.
	 
	Given a (\emph{signal, noisy decision}) pair ($u,\mathbf{y}$) and a hypothesis $\theta$, the loss function is set as the minimum distance between $\mathbf{y}$ and the optimal solution set S($u;\theta$) in the following form:
	\begin{align}
	\label{loss function}
	l(\mathbf{y},u;\theta) = \min_{\mathbf{x} \in S(u;\theta)}\|\mathbf{y}- \mathbf{x}\|_2^{2}. 
	\end{align}
	
	 Once receiving the $t$-th (signal, noisy decision) pair ($u_{t}$, $\mathbf{y}_{t}$), $\theta$ would be updated by solving the following optimization problem:
	\begin{align}
	\label{update rule}
	\theta_{t}= \arg\min_{\theta \in \Theta} \frac{1}{2} \|\theta -\theta_{t-1} \|_{2}^{2} + \frac{\lambda}{\sqrt{t}} l(\mathbf{y}_{t}, u_{t}; \theta),
	\end{align}
	 where  $\frac{\lambda}{\sqrt{t}}$ is the learning rate in round $t$, and $l(\mathbf{y},u;\theta)$ is the loss function defined in \eqref{loss function}.
	

	\subsection{Learning Time Varying Risk Preferences}
	In the portfolio optimization problem we consider, the (\emph{signal, noisy decision}) pair correspond to (\emph{market price, observed portfolio}). A learner aims to learn the investor’s risk preference from (\emph{price, portfolio}) pairs. More precisely, the goal of the learner is to estimate $r$ in \ref{mean-variance portfolio}. In online setting, the (\emph{price, portfolio}) pairs become available to the learner one by one. Hence, the learning algorithm produces a sequence of hypotheses $(r_{1}, . . . , r_{T+1})$. Here, $T$ is the total number of rounds, $r_{1}$ is an arbitrary initial hypothesis, and $r_{t}$ for $t \geq 1$ is the hypothesis chosen after observing the $(t-1)$-th (\emph{price, portfolio}) pair.
	
	 Given a (\emph{price, portfolio}) pair ($u,\mathbf{y}$) and a hypothesis $r$, similar to \eqref{loss function}, we set the loss function as follows:
	\begin{align}
	\label{loss function for PO}
	l(\mathbf{y},u;r) = \min_{\mathbf{x} \in S(Q,\mathbf{c},A,\mathbf{b};r)}\|\mathbf{y}- \mathbf{x}\|_2^{2}. 
	\end{align}
	where $S(Q,\mathbf{c},A,\mathbf{b};r)$ is the optimal solution set of \ref{mean-variance portfolio}.
	
	 \begin{proposition}\label{proposition: optimal set}
	 Consider \ref{mean-variance portfolio}, and assume we are given a candidate portfolio $\mathbf{x}$.  Then, $\mathbf{x} \in S(Q,\mathbf{c},A,\mathbf{b};r)$ if and only if there exists a $\mathbf{u}$ such that:
     \begin{align}
     \label{math: kkt}
	 \begin{array}{llll}
	      S(Q,\mathbf{c},A,\mathbf{b};r)=\{ & \mathbf{x}: A\mathbf{x} \geq \mathbf{b}, \mathbf{u} \in       \mathbb{R}_{+}^{m}, \\
	                                & \mathbf{u}^{T}(A\mathbf{x}-\mathbf{b}) = 0, \\
	                                & Q\mathbf{x} - r \mathbf{c} - A^{T}\mathbf{u} = 0.\}
	 \end{array}
	 \end{align}
	 \end{proposition}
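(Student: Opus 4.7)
The plan is to prove Proposition \ref{proposition: optimal set} by invoking the Karush–Kuhn–Tucker (KKT) conditions, which for this problem are both necessary and sufficient for global optimality. First, I would observe that \ref{mean-variance portfolio} is a convex quadratic program: the objective $\tfrac{1}{2}\mathbf{x}^T Q \mathbf{x} - r\mathbf{c}^T\mathbf{x}$ is convex because $Q \succeq 0$, and the feasible region $\{\mathbf{x}: A\mathbf{x}\geq \mathbf{b}\}$ is a polyhedron. Since the constraints are all affine, the linearity constraint qualification holds automatically (no Slater condition is required), so the KKT conditions characterize the optimal solution set exactly.

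Next, I would form the Lagrangian with multipliers $\mathbf{u}\in\mathbb{R}^{m}_{+}$ associated with $A\mathbf{x}-\mathbf{b}\geq \mathbf{0}$,
\begin{equation*}
L(\mathbf{x},\mathbf{u}) \;=\; \tfrac{1}{2}\mathbf{x}^T Q \mathbf{x} - r\mathbf{c}^T\mathbf{x} - \mathbf{u}^T(A\mathbf{x}-\mathbf{b}),
\end{equation*}
and write out the four KKT conditions: (i) stationarity $\nabla_{\mathbf{x}}L = Q\mathbf{x} - r\mathbf{c} - A^T\mathbf{u} = \mathbf{0}$; (ii) primal feasibility $A\mathbf{x}\geq \mathbf{b}$; (iii) dual feasibility $\mathbf{u}\geq \mathbf{0}$; and (iv) complementary slackness $\mathbf{u}^T(A\mathbf{x}-\mathbf{b})=0$. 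These four conditions are exactly the system displayed in \eqref{math: kkt}.

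For the ``only if'' direction, I would note that any $\mathbf{x}\in S(Q,\mathbf{c},A,\mathbf{b};r)$ must admit a corresponding $\mathbf{u}$ satisfying (i)--(iv) by the necessity of KKT under the linear constraint qualification. For the ``if'' direction, I would invoke the sufficiency of KKT for convex programs: given any pair $(\mathbf{x},\mathbf{u})$ satisfying (i)--(iv), convexity of the objective together with feasibility and stationarity implies that $\mathbf{x}$ globally minimizes $L(\cdot,\mathbf{u})$ over $\mathbb{R}^{n}$; combined with primal feasibility and complementary slackness, this yields $\mathbf{x}\in S(Q,\mathbf{c},A,\mathbf{b};r)$.

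I do not expect any serious obstacle, since the argument is a textbook application of convex duality; the only subtlety worth flagging is that sufficiency of KKT relies on the convexity of the objective (ensured by $Q\succeq 0$), and that necessity does not need Slater's condition here because the constraints are affine. If $Q$ were only positive semi-definite (rather than positive definite), then $S(Q,\mathbf{c},A,\mathbf{b};r)$ may contain multiple optima, but the KKT characterization still describes the entire optimal set, which is precisely what the statement asserts.
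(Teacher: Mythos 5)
Your proof is correct and follows essentially the same route as the paper: both characterize the optimal set of the convex quadratic program \ref{mean-variance portfolio} via the KKT conditions, using convexity (from $Q \succeq 0$) for sufficiency and a constraint qualification for necessity. The only difference is cosmetic—you justify necessity via the linearity of the constraints, while the paper appeals to a Slater condition—and your version, which also writes out the Lagrangian and the four KKT conditions explicitly, is if anything the slightly more careful rendering of the same argument.
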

	 \begin{proof}
	 If we seek to learn $r$, the optimal solution set for \ref{mean-variance portfolio} can be characterized by Karush-Kuhn-Tucker (KKT) conditions. For any fixed values of the data with $Q\in\mathbb{R}^{n \times n} \succeq 0$ and $\mathbf{c}$, the forward problem is convex and satisfies a Slater Condition. Thus, it is necessary and sufficient that any optimal solution $\mathbf{x}$ satisfy the KKT conditions. The KKT conditions are precisely the equations \eqref{math: kkt}. Here, $\mathbf{u}$ can be interpreted as the dual variable for the constraints in \ref{mean-variance portfolio}. 
	 \end{proof}

	 \begin{proposition}\label{proposition: linearization}
	 The equation $\mathbf{u}^{T}(A\mathbf{x}-\mathbf{b}) = 0$ in \eqref{math: kkt} is equivalent to that there exist $M > 0$ and $\mathbf{z} \in \{0, 1\}^{m}$ such that 
	 \begin{align}
	 \label{math: linearization}
	 \begin{array}{llll}
	      & \mathbf{u} \leq M\mathbf{z},  \\
         & A\mathbf{x} - \mathbf{b} \leq M(1-\mathbf{z}). \\ 
	 \end{array}
	 \end{align}
	 
	 \end{proposition}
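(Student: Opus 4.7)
The plan is to prove this as a biconditional using the classical ``big-$M$'' linearization argument for complementary slackness, exploiting the fact that within the KKT system of Proposition~\ref{proposition: optimal set} we already know $\mathbf{u}\geq \mathbf{0}$ and $A\mathbf{x}-\mathbf{b}\geq \mathbf{0}$ componentwise. Given nonnegativity of both vectors, the scalar equality $\mathbf{u}^{T}(A\mathbf{x}-\mathbf{b})=0$ is equivalent to the componentwise complementarity $u_i\,(A\mathbf{x}-\mathbf{b})_i=0$ for every $i\in[m]$, so the proof reduces to showing that the pair of inequalities in \eqref{math: linearization} encodes exactly this componentwise condition via a binary switch.

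For the forward direction, I would assume complementary slackness holds and explicitly construct $\mathbf{z}\in\{0,1\}^m$ by setting $z_i=1$ whenever $u_i>0$ and $z_i=0$ otherwise. Complementarity then guarantees that $z_i=1$ forces $(A\mathbf{x}-\mathbf{b})_i=0$, while $z_i=0$ forces $u_i=0$. Choosing $M$ larger than $\max_i u_i$ and larger than $\max_i (A\mathbf{x}-\mathbf{b})_i$ makes $\mathbf{u}\leq M\mathbf{z}$ and $A\mathbf{x}-\mathbf{b}\leq M(\mathbf{1}-\mathbf{z})$ hold trivially in both cases.

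For the reverse direction, I would argue componentwise on $i\in[m]$. If $z_i=0$, then $u_i\leq M\cdot 0 = 0$ combined with $u_i\geq 0$ yields $u_i=0$, so the product vanishes. If $z_i=1$, then $(A\mathbf{x}-\mathbf{b})_i\leq M\cdot 0=0$ combined with $(A\mathbf{x}-\mathbf{b})_i\geq 0$ yields $(A\mathbf{x}-\mathbf{b})_i=0$, so again the product vanishes. Summing over $i$ reproduces $\mathbf{u}^{T}(A\mathbf{x}-\mathbf{b})=0$.

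The main obstacle, and the reason the statement hedges with ``there exist $M>0$ and $\mathbf{z}$,'' is the choice of $M$: to obtain a uniformly valid linearization over a family of instances one needs an a priori bound on both $\|\mathbf{u}\|_{\infty}$ and $\|A\mathbf{x}-\mathbf{b}\|_{\infty}$, which in turn depends on the problem data $(Q,\mathbf{c},A,\mathbf{b},r)$. For the purpose of this proposition, however, only existence of $M$ for a given $(\mathbf{x},\mathbf{u})$ is required, and that existence is immediate from finiteness of the two maxima above; the deeper question of a data-independent $M$ is a well-known weakness of big-$M$ formulations that I would note but leave outside the scope of the proof.
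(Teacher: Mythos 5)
Your proof is correct: the paper states Proposition~\ref{proposition: linearization} without any proof, treating it as the standard big-$M$ linearization of complementary slackness, and your componentwise argument (with $z_i$ acting as the switch and $M$ chosen above $\max_i u_i$ and $\max_i(A\mathbf{x}-\mathbf{b})_i$) is exactly the reasoning the paper implicitly relies on. You also correctly identify the one hypothesis that makes the equivalence work, namely that $\mathbf{u}\geq\mathbf{0}$ and $A\mathbf{x}-\mathbf{b}\geq\mathbf{0}$ are already enforced in the KKT system \eqref{math: kkt}, so nothing further is needed.
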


	We next present a tractable reformulation of the update rule in \eqref{update rule} that constitutes the first main result of this paper. Applying Propositions \ref{proposition: optimal set} and \ref{proposition: linearization}, the update rule by solving \eqref{update rule} with the loss function \eqref{loss function for PO} becomes
	\begin{align}
	\label{kkt reformulation}
	\tag*{IPO}
	\begin{array}{llll}
            \min\limits_{r, \mathbf{x}, \mathbf{u}, \mathbf{z}} &\frac{1}{2}\|r - r_{t-1}\|^2 + \frac{\lambda}{\sqrt{t}}\| \mathbf{y}_{t} - \mathbf{x}\|^2\\
        	\; s.t. & A\mathbf{x}\geq \mathbf{b}, \\
        	\quad & \mathbf{u} \leq M\mathbf{z},  \\
        	\quad & A\mathbf{x} - \mathbf{b} \leq M(1-\mathbf{z}), \\
        	\quad & Q_{t}\mathbf{x} - r\mathbf{c}_{t} - A^T\mathbf{u} = 0,  \\
        	\quad & \mathbf{x} \in \mathbb{R}^{n} , \mathbf{u} \in \mathbb{R}_{+}^{m}, \mathbf{z} \in \{0, 1\}^{m}, 
	\end{array}
	\end{align}

 where $\mathbf{z}$ is a vector of binary variables used to linearize KKT conditions, and $M$ is an appropriate number used to bound the dual variable $\mathbf{u}$ and $A\mathbf{x} - \mathbf{b}$, and $\lambda$ is a learning step parameter. Clearly, \ref{kkt reformulation} is a mixed integer second order conic
program (MISOCP), and denoted by the \emph{Inverse Problem}.

\section{Validation and Experimental Setup}

Validation of learned risk preferences from the proposed data-driven approach is challenging because the inherent risks associated with existing portfolios are usually unknown. In this paper, we proposed two different frameworks to validate learned risk preferences. The first framework is called Forward-Inverse validation: we first generate dummy portfolio using known risk preference value, as the process of solving the \emph{Forward Problem}. Then, we learn the risk tolerance value as solving the \emph{Inverse Problem} and compare the difference (error) between the learned value and the known value. The main goal of this framework is to explore the best hyper-parameters to minimize the error. The second framework is setup for learning using real-world time-series portfolio data. We propose a novel procedure to align sequentially observed portfolios with real market price data, and decompose the data as sequential \emph{<price, portfolio>} pairs to learn risk preferences and their uncertainties via online learning.  
    
\subsection{Forward-Inverse Validation}\label{section: Forward-Inverse Validation}
\subsubsection{Point Estimation}
We benchmark hyperparameters $M$, $\lambda$ (also include $\varepsilon$ for factor-space algorithm presented in Section 6) of online inverse learning using grid search and Forward-Inverse validation. For each combination of hyper parameters, we sample a number of known risk tolerance values $r^s_{i}$ in a positive range (e.g.,[0.5,20]) in the forward problem to generate portfolio data. Then, we start with a number of different initial guess values $r^{g}_{j}$ from (e.g., [1,..,10]), using a specific group of hyper-parameters to estimate back the known risk tolerance value, denoted by $r^{e}_{ij}$, by solving the inverse problem. The total sum of square error between $r^{s}$ and $r^{e}$ for portfolio data generated by a specific $r^{s}$
\begin{align}
\label{squared error}
\frac{1}{|i|\times|j|} \sum_{i}\sum_{j} \frac{(r^{s}_{i} - r^{e}_{ij})^2}{(r_{i}^{s})^{2}}
\end{align}
is considered as average error of risk tolerance point estimation given a specific set of hyperparameters. The grid search space is defined as follows:
\begin{align}
M & \in [100, 500, 1000, 5000, 10000] \nonumber \\
\lambda & \in [100, 500, 1000, 5000, 10000] \nonumber \\
\varepsilon & \in [0.005, 0.01, 0.05, 0.1] \textrm{(only for factor space)} \nonumber
\end{align}
	
Our results show forward-Inverse validation using real market price data, the best combination of hyper-parameters can reduce the estimation errors smaller than $1 \times 10^{-9}$. 

\subsubsection{Ordered Risk Preference Learning}
We further generate a series of portfolios using a number of sorted risk tolerance values $r_{1} \leq,..., \leq r_{m}$. Our purpose is to validate whether the learned risk tolerance values can preserve the true order in Forward-Inverse benchmark using real market data.
\begin{align}
\{\mathbf{y}_{1}, ..., \mathbf{y}_{d} \} & \leftarrow \textrm{Forward Problem}(\{r_{1},...,r_{d}\}) \nonumber \\
\{\hat{r}_{1}, ..., \hat{r}_{d} \} & \leftarrow \textrm{Inverse Problem}(\{\mathbf{y}_{1},...,\mathbf{y}_{d}\}) 
\end{align}
	
The validation consists of three steps. First, we plot the efficient frontiers (EF) by increasing $r$ in the \emph{Forward Problems}. Each data point on the EF curve represents a (risk, profit) pair determined by a predefined $r$. Smaller $r$ are positioned closer to origin points on the EF curve, whereas larger values locate far from the origin on the EF curve. Then, we uniformly sample a number of (e.g., 30) risk tolerance values along the curve from small to large, and use the optimal allocation results as inputs to solve the \emph{Inverse problems}. Then, the estimated values are compared with true values. We compare both the orders and the absolute values along diagonal lines. Ideally, good estimations should align most points on the diagonal (Figure \ref{figure:Order Estimation of Risk Tolerance Values in Sector Space}).      

	\begin{figure}
    \centering 
	\begin{subfigure}
		\centering
		\includegraphics[width=.2\textwidth]{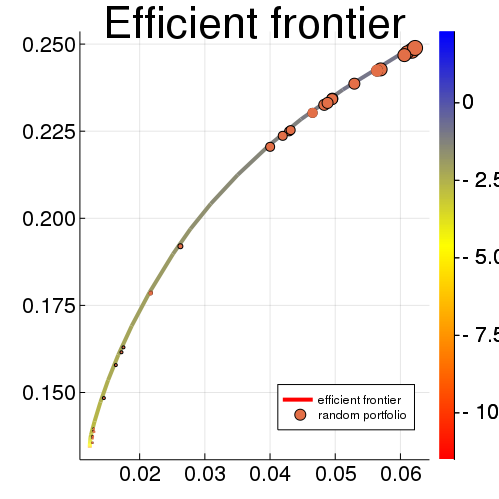}
	\end{subfigure}%
	\begin{subfigure}
		\centering
		\includegraphics[width=.2\textwidth]{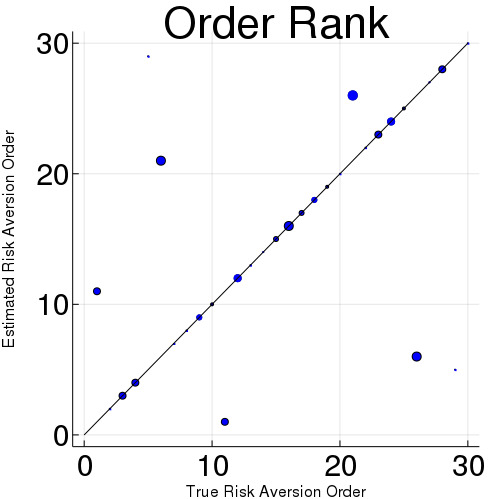}
	\end{subfigure}	
	\caption{Order Estimation of Risk Tolerance Values. The plot is generated using VFINX asset data observed from Jan 1st 2000 to March 31st 2015 aggregated to sector level. First, the efficient frontier curve in the left figure is generated by solving \emph{forward problem} using 100 evenly sampled $r$ values on logspace from -5 to 1, corresponding to $r=[10^{-5},...,10]$. The coordinates of curve are (Risk, Profit) values represented by ($\mathbf{x}^{T}Q\mathbf{x},\mathbf{c}^{T} \mathbf{x}$) where $\mathbf{x}$ are obtained by solving Forward problem. The color of curve is mapped to the logarithm of risk tolerance values as indicated in color map. In the same log space, 30 random risk tolerance values, indicated by red circles, are sampled in uniform distributions. The sizes of circles are proportional to their values. Then, portfolios generated by these samples in \emph{Forward Problem} are used as observations to estimate back $r$ in \emph{inverse problem}. The right figure compares the order of true values (x-axis) with the order of estimated values (y-axis).}
	\label{figure:Order Estimation of Risk Tolerance Values in Sector Space}
\end{figure}	
	

\subsection{Time-Series Portfolios Experimental Framework}\label{section:Time-Series Portfolios Experimental Framework}

\begin{figure}[ht]
	\includegraphics[width=.4\textwidth]{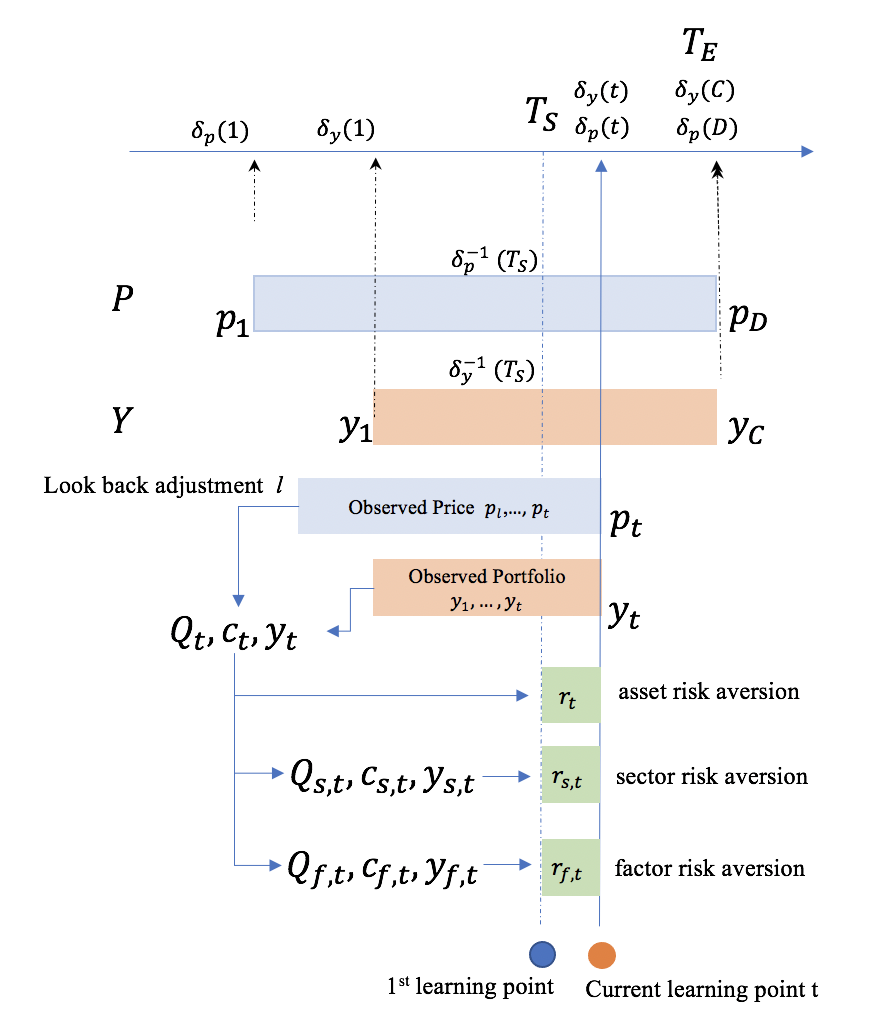}
	\caption{General framework for time-varying portfolio observations and risk preference learning.}%
	\label{figure: time-varying observations}
\end{figure}	

\subsubsection{Time-series Portfolio Data $Y$}
\noindent We define asset-level portfolio holdings $Y=\{\mathbf{y}_{1},...,\mathbf{y}_{C}\}$ collected from starting date $\delta_{y}(1)$ till ending date $\delta_{y}(C)$, where $\delta_y(\cdot)$ is a function that maps portfolio data indices to dates. $Y$ may be collected at different frequencies (e.g., daily, quarterly, etc.), and each $\mathbf{y}_{t} \in Y ( t=1,...,C)$ represents a portfolio vector consists of $n$ assets.  

\subsubsection{Asset Price Data $P$}
\noindent Analogously, asset price market data $P=\{\mathbf{p}_{1},...,\mathbf{p}_{D}\}$ is usually collected using some public APIs (e.g., AlphaVantage\cite{alphavantage}, Yahoo Finance) from starting date $\delta_{p}(1)$ to ending date $\delta_{p}(D)$. In our approach, $P$ is recorded at trading-day daily bases and the raw data includes market opening price and market closing price. On that basis, we use a 253-day rolling window (trading days) to calculate yearly profit ratios using the current day market closing price minus previous year's market opening price, then normalized by the later to obtain an annual return sequence $P'=\{\mathbf{p}'_{1},...,\mathbf{p}'_{D}\}$. Also, we usually collect more historical price data than portfolio data, therefore $D$ is usually larger than $C$, and $\delta_{p}(1)$ is earlier than $\delta_{y}(1)$. 

\subsubsection{Align Portfolio data with Asset Price data}
\noindent The collections of $Y$ and $P$ then will be aligned on the same time horizon. Lets define $t$ is an observation day, $\delta_{y}^{-1}(t)$ is a function to map $t$ to the index number in $Y$, and $\delta_{p}^{-1}(t)$ maps $t$ to index in $P$. Thus, after aligning, we can layout $P$ on the time horizon from $\delta_{p}(1)$ to $\delta_{p}(D)$, and $Y$ from $\delta_{y}(1)$ to $\delta_{y}(D)$. At a observation point $t$ (if $t$ is overlapped with portfolio and price data spans), we have observed portfolio sequence  $Y_{t}=\{\mathbf{y}_{1},...,\mathbf{y}_{\delta_{y}^{-1}(t)}\}$, and price sequence $P_{t}=\{\mathbf{p}_{1},...,\mathbf{p}_{\delta_{y}^{-1}(t)}\}$. We treat those as historical price data and portfolio data till observation time $t$, and use them to learn risk preference at time $t$.    

\subsubsection{Look-back window and Uncertainty Measure} Usually it is an arbitrary choice to determine the starting date of price $\delta_{p}(1)$ included in analysis. Sometimes this choice could have impacts on the forward portfolio construction results as well as inverse risk preference learning. To simulate this uncertainty, we introduce a look back window $l$ as adjustment of the observation of historical market signals $P$, given by $P_{t}=\{\mathbf{p}_{l},...,\mathbf{p}_{\delta_{y}^{-1}(t)}\}$. When $l=1$, we end up using all collected historical price data, and we can increase $l$ with a certain step length to $L$. For example, if $L=100$, it means we truncate out almost half year of collected price data, and use the remaining data as observations to learn risk preference. In our experiment, we usually try different $l$ values from 1 to $L$, and report the mean and standard deviations of learned risk preferences.

\subsubsection{Calculating time-varying covariance matrices and expected portfolio returns} At each observation time $t$, the covariance matrix and expected return of $n$ assets $\mathbf{c}_{t}$ is calculated as
\begin{align}
    Q_{t} &= \textrm{cov}([\mathbf{p}'_{l},...,\mathbf{p}_{t}],[\mathbf{p}'_{l},...,\mathbf{p}_{t}]),  \\
    \mathbf{c}_{t} &= [c_{1,t},...,c_{i,t},...,c_{n,t}]^{T},
\end{align}
\noindent where $c_{i,t}=\textrm{mean}([p'_{i,l},...,p'_{i,t}])$ is the mean profit value of $i-$th asset in historical price data. 

\subsubsection{Learning time-varying risk preferences} Now we can define a date range called learning period, from $T_{s}$ to $T_{e}$ from the time horizon where $P$ and $Y$ overlap. Usually we select a $T_{s}$ later than the first portfolio date $\delta_{p}(1)$ because we want to include multiple samples $\{y_{1},...,y_{\delta_{y}^{-1}(T_{s})}\}$ for online learning. Otherwise, if we set $T_{s}==\delta_{p}(1)$, then the first observation will have only one portfolio sample, and the learned risk preferences at the beginning could be very unstable because there is not enough samples to learn. After learning at $T_{s}$, we move to $T_{s}+1$ and include a new sample in our portfolio in observed sequence, and learn the next risk preference, and so on. If from $T_{s}$ to $T_{e}$ there are $T$ number of learning period, we obtain a time-varying risk preference curve. And with adjustments of look back window $l$, we repeat the process can simulate a confidence band of risk preferences.          

\subsubsection{Aggregation of $Q_{t}, Y_{t}, \mathbf{c}_{t}$:} When the sizes of portfolios are large ($n$>1000) such as mutual funds mentioned in Section 6, to increase the efficiency and stability of IPO ($n=500$ takes 80 CPU seconds, $n>1000$ takes more than 1000 seconds), also to enable easy comparison of portfolios with different sizes, we may group assets by sectors, or apply PCA on $Q_{t}$ to factorize portfolios, thus obtaining $Q_{s,t}, c_{s,t}, y_{s,t}$ in sector space and $Q_{f,t}, c_{f,t}, y_{f,t}$ in factor space respectively to learn risk preferences in Sector or Factor space. \\

\noindent So far we have introduced a general experimental framework setup for time varying portfolio risk preference learning. In next two sections, we will use this framework to explain our data sets and experimental settings.

\section{Case Study 1: Robotic portfolios}

	\begin{figure*}[!t]
	    \centering
		\includegraphics[width=.95\textwidth]{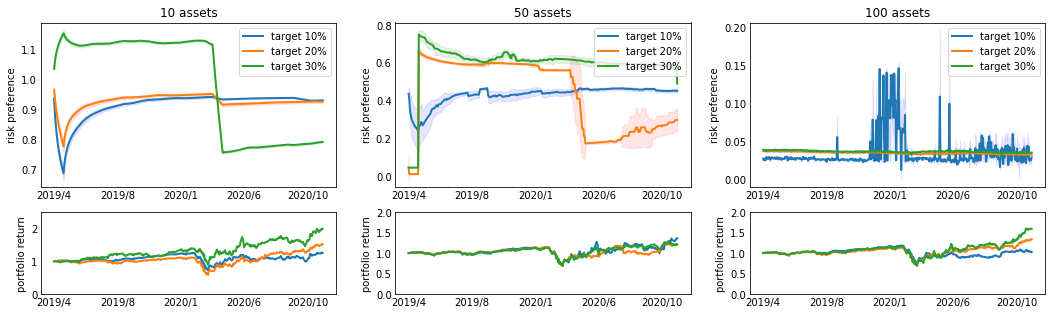}
		\caption{Comparison of robotic portfolio risk preference Values. Figures at the top are learned risk preference values, and figures at the bottom are profit returns.}%
		\label{figure: Comparison of robotic portfolio risk preference Values}
	\end{figure*}
	
\noindent We apply proposed method on a series of actively-managed portfolios generated from an in-house portfolio construction platform. This platform utilizes a Deep Reinforcement Learning (DRL) model \cite{robo-advising} to automatically construct portfolios for specified target investment goals. The DRL model addresses a similar objective as the forward problem \ref{mean-variance portfolio} used in our risk preference learning approach, but formulates it in multi-period setting \cite{multiperiod} and generates auto-balanced portfolios through deep reinforcement learning.	An important input parameter to specify in such a DRL model is the \emph{expected portfolio return}, which could be easily proved to have a one-to-one correspondence with the \emph{risk preference} in single-period mean-variance formulation (See Appendix \ref{equivalence}). Since specifying target return is more intuitive to investors than determining risk preference, the DRL system relies on extensions of the following equivalent formulation:
	
	\begin{align}
	\label{deep-reinforcement-problem}
	\tag*{PO-DRL} 
    \begin{array}{llll}
         \min\limits_{\mathbf{x} \in \mathbb{R}^{n}} &  \frac{1}{2} \mathbf{x}^{T} Q \mathbf{x}  \\
    	\;s.t. &  \mathbf{c}^{T} \mathbf{x} =  z, \\
    	       &   A \mathbf{x} \geq  \mathbf{b}     
    \end{array}
	\end{align}	
	
	\noindent where $z$ is the \emph{expected portfolio return}. In the multi-period optimization (MPO) setting \cite{multiperiod}, the dynamics of the portfolio rebalanced at time point $i=0,1,2,...,N-1$ is 
	\begin{align}
	\label{multi-period IPO}
	\tag*{MPO}
	\begin{array}{llll}
        W_{i+1} = \sum_{j=1}^n v_i^j\frac{S_{i+1}^j}{S_i^j} + W_i-\sum_{j=1}^n v_i^j, \quad i=0,\dots,N-1.
    \end{array}
	\end{align}
	\noindent where $W_i$ represents the wealth at the $i$-th fine-tuning time, $S_{i}^j$ represents the price of stock $j$ at $i$. The vector $\mathbf{v}_i=(v_i^1, v_i^2, \dots, v_i^n)^T$ gives the dollar value allocations among all the $n$ stocks in the portfolio at each fine-tuning time $i$.  The precise multi-period extension is hence 
    \begin{eqnarray}
    & & \min_{\mathbf{v}_i, i=0,\dots, N-1} \text{Var}[W_N], \nonumber \\
    & & \; s.t. \ \mathbb{E}[W_N]=1+z.\label{Multi_MV}
    \end{eqnarray}
    It is assumed above that the initial capital $W_0=1$. Our in-house DRL platform solves the problem in equation \eqref{Multi_MV} using deep neural networks and actor-critic RL methodologies. More information about the implemented DRL algorithms is available in \cite{robo-advising}.

    \textbf{Experimental Setup:} 
	 The DRL model is trained and validated using daily market price data of S\&P 500 stocks from 2017-04-01 to 2019-04-01, and generates daily-rebalanced portfolios from 2019-04-01 till 2020-12-10. DRL allows investor to specify the size of portfolio ($n$ assets), and also provides options allowing switching of assets or keeping the same assets. In our experiments, we turn off asset switching functions, and create three sizes of portfolios (10, 50, 100 assets) using three specified annual \emph{expected portfolio returns} (10\%, 20\%, 30\%) as investment target. The DRL model automatically rebalances the portfolios per each trading day, so for the whole test period it generated 9 different time-series portfolios (combing asset size with target profit) where each portfolio is composed by 425 daily portfolio vectors. Our goal is to learn the risk preferences of these robotic portfolios and compare our learning results with the investment target values that drive the DRL model. 
	
    Per discussions about experimental framework in Section \ref{section:Time-Series Portfolios Experimental Framework}, the portfolio starting date $\delta_{y}(1)$ is 2019-04-01, and ending date $\delta_{y}(C)$ is 2020-12-10. The price data starting date $\delta_{p}(1)$ is 2000-01-02, and ending date $\delta_{p}(D)$ is 2020-12-10. Because the portfolios are daily data, the price dates and portfolio dates are aligned on trading days. We set the observation period $T_{s}=\delta_{y}(1)$ and $T_{e}=\delta_{y}(C)$. The look back period adjustment $l$ are ten values from $\{1,11,21,...,101\}$. Hyper-parameters are optimized using as single point forward-backward validation (Section \ref{section: Forward-Inverse Validation}). 
	
	\textbf{Results: } Our results show that even if these robotic portfolios are constructed using a different (more complicated, across-time) objective, the proposed inverse optimization process can still learn the ordinal relationships of important decisional inputs governing the investment procedures. As shown in Figure \ref{figure: Comparison of robotic portfolio risk preference Values}, the learned daily risk preference curves corresponding to higher target returns are generally larger than those associated with lower target returns in all portfolios. Most of the time, the learned risk preferences are stable across the investment period, but significant volatility occurs on some portfolios around March 2020, which is understandable because it reflects the market movement that happened during that period. If we first look at the return of those robotic portfolios (bottom figures), all portfolios have positive returns before turning negative till March 2020, and afterward, they all recover to be profitable. Small portfolios composed of 10 assets achieve higher returns at Dec 2020 (about 100\% actual profits for those targeting 30\% annual returns), whereas larger portfolios have slightly lower returns (about 50\% actual profits for 50-asset and 100-asset portfolios targeting 30\% yearly returns). Linking the investment targets of portfolio returns back to learned risk preferences at the top figures, they are generally consistent at the ordinal level, which is also understandable because of the one-to-one correspondence between target return and risk preference in the DRL's multi-period objective function. It is also noticeable that around March 2020 three risk preference curves vibrate drastically (10-asset target 30\%, 50-asset target 20\% and 100-asset target 10\%), which might be due to the \emph{regime switch} in DRL model because of the underlying changes of market data, as well as the rigidness of the long-only allocation constraint.
	
	\textbf{Discussions: }Besides the promising results, it is also noticeable that the sizes of portfolios impact risk preferences learned by the proposed method. According to our experiments, larger portfolios usually lead to smaller risk preferences. Apart from the coincidence in portfolio theory that diversification is preferable in investment (which may imply lower risk), we believe this is more due to the numerical issue in the \ref{mean-variance portfolio} objective: when the dimension of $x$ turns higher under the constraints that its L1-norm should be 1, the first quadratic term $x^{T}Qx$ becomes smaller, and thus impacts the risk preference parameter $r$ in the second term, which is basically a regularization parameter, to turn smaller. Such a property causes an issue when we compare risk preferences across portfolios of different sizes, and we will address this issue in the next case study.

\section{Case Study 2: Mutual Funds}
In this case study, we will explore risk preferences learning on seven time-series mutual fund portfolio holdings. To compare portfolios of different sizes, we propose two extensions of the original asset-level algorithm by aggregating high dimensional portfolios into sector space and factor space. The sector space algorithm utilizes attribute knowledge of each asset and groups them into 11 industrial sectors. Alternatively, we can apply Principal Component Analysis (or other factor analysis) on the covariance matrix and learn risk preferences on factorized portfolios. Due Readers interested in these extensions can read the detailed derivations at this paper's full version at arXiv. 

\textbf{Mutual Funds:} VFINX and SWPPX are S\&P 500 funds broadly diversified in large-cap market. FDCAX's portfolio mainly consists of large-cap U.S. stocks and a small number of non-U.S. stocks. VHCAX is an aggressive growth fund invests in companies of various sizes that the managers believe will grow in time but may be volatile in the short-term. VSEQX is an actively managed fund holding mid- and small-cap stocks that the advisor believes have above-average growth potential. VSTCX is also an actively managed fund that offers broad exposure to domestic small-cap stocks believed to have above-average return potential. LMPRX mainly invests in common stocks of companies that the manager believes will have current growth or potential growth exceeding the average growth rate of companies included in S\&P 500.

The reasons of analyzing mutual fund portfolios are threefold. First, mutual fund holdings are freely accessible public data and it is easier to discuss and interpret results using public data than clients' private data. Second, mutual funds are usually constructed by tracking industry capitalization indices, or managed by fund managers. Thus, they can be considered as optimal portfolios constructed through rational decision making. Moreover, risk measurements of mutual funds are common knowledge reported by several well-known metrics and estimations with our methods can be directly validated by these metrics. Third, mutual funds usually are diversified on large numbers of assets and they fit the underlying MPT used in our model. Moreover, high-dimensional portfolios are challenging to optimize in the inverse optimization problem and they really test the efficiency of our approach. 

\textbf{Experimental Setup:} We collect 10 years of asset-level portfolio holdings of the aforementioned seven mutual funds from quarterly reports publicly available on SEC website, starting from March 2010 ($\delta_{y}(1)$) to Dec 2020($\delta_{y}(C)$). We also collect twenty years of historical asset level price data from July 1st, 2001 ($\delta_{p}(1)$) to Dec 21st, 2020 ($\delta_{p}(D)$). The learning period is set from March 2015 ($T_{s}$, corresponding to $\delta_{y}^{-1}(T_{s})=21$) till Dec 2020 ($T_{e}$, corresponding to $\delta_{y}^{-1}(T_{e})=44$). Market price data is aggregated to monthly level to align with quarterly portfolio. The adjustment parameter of look back $l$ is evenly selected from $1,3,5,...,11$ (months).

To further validate the estimated risk tolerance values, we compare them with two standard metrics evaluating investment risk used in financial domain: \textbf{Inverse of Sharpe Ratio} and \textbf{Mutual Fund beta values }. 
	
	\begin{figure*}[!t]
	    \centering
		\includegraphics[width=.9\textwidth]{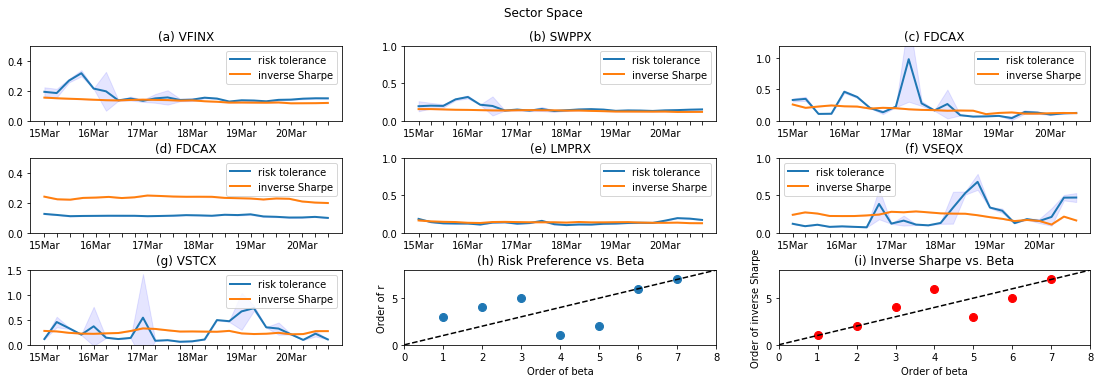}
        
		\includegraphics[width=.9\textwidth]{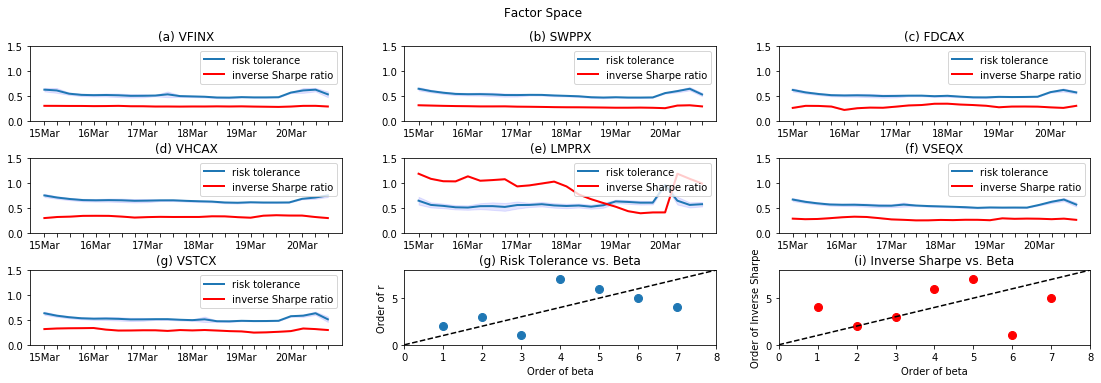}
		\caption{Comparison of Mutual Fund Risk Tolerance Values in Sector space and Factor space. Estimated risk tolerance parameters (blue curves) of seven mutual funds are illustrated from (a) to (g). For each fund, 24 time-varying risk tolerance values are estimated, corresponding to 24 quarterly observed portfolio holdings from March 2015 to December 2020. In Factor Space, we perform eigenvalue decomposition on $Q_{t}$ and select 5 eigenvectors corresponding to 5 largest eigenvalues as factors. In figure (h) of both spaces, we rank the seven mutual funds by the order of average risk tolerance values estimated over all 24 quarters, from small to large, and compare them with the order of beta values shown in Table \ref{tab:Beta values of mutual funds studied in our approach}. In figure (g), we compare the order ranked by inverse Sharpe ratios with order ranked by beta values.}%
		\label{figure:Comparison of Mutual Fund Risk Tolerance Values in Sector Space}
	\end{figure*}
	
	
	 \textbf{Comparison with Inverse of Sharpe Ratio:} In finance, the Sharpe ratio \citep{Sharpe1965} measures the performance of a portfolio compared to a risk-free asset, after adjusting the risk it bears. It is defined as the difference between the expected return of the asset return $E(R_{a})$ and the risk-free return $E(R_{f})$, divided by the standard deviation of the investment:
	\begin{align}
	\label{math:Sharpe ratio}
	    S_{a}= \frac{E(R_{a}) - E(R_{f})}{\sigma}
	\end{align}

	 Note that the Sharpe ratio is proportional to \emph{risk-aversion}. In our paper, we use a modified \emph{Inverse Sharpe ratio} to compare with learned \emph{risk tolerance} values:
	\begin{align}
	\label{math:Inverse Sharpe ratio}
	    IS_{a,t}= \frac{\mathbf{x}_{a,t}^{T}Q_{a,t}\mathbf{x}_{a,t}}{\mathbf{c}_{a,t}^{T} \mathbf{x}_{a,t}}
	\end{align}
     In \eqref{math:Inverse Sharpe ratio}, $\mathbf{x}_{a,t}, Q_{a,t}, \mathbf{c}_{a,t}$ are variables defined over asset-level, and change over time. To avoid redundant discussions, we analogously derive similar Inverse Sharpe ratios in sector and factor space.
     
    
     According to Figure \ref{figure:Comparison of Mutual Fund Risk Tolerance Values in Sector Space}, estimated $r$ are more volatile over time in sector space, and the confidence intervals are generally wider than inverse Sharpe ratios. However, in factor space, estimated $r$ becomes smooth and Inverse Sharpe ratio is mostly lower than proposed risk preference, where the main reason is factors represent dominant volatility. 
     

	 \textbf{Comparison with Mutual Fund beta values:} In Capital Asset Pricing Model (CAPM) \citep{Treynor1961MarketVT,Sharpe64,Lintner65,Mossin1966}, the \emph{beta coefficient} of a portfolio is a measure of the risk arising from exposure to general market movements as opposed to idiosyncratic factors, given by 
	\begin{align}
	    \beta_{a}= \frac{E(R_{a}) - E(R_{f})}{E(R_{m}) - E(R_{f})} ,
	\end{align}
	 where $E(R_{m})$ is the expected return of market (e.g., benchmark portfolio) and others are the same as defined in \eqref{math:Sharpe ratio}. The \emph{beta} measures how much risk the investment is compared to the market. If a stock is riskier than the market, it will have a \emph{beta} greater than one. If a stock has a \emph{beta} of less than one, the formula assumes it will reduce the risk of a portfolio.
	
	 The \emph{beta} values of mutual funds are public data and change over time. We list the recent two beta values in Table \ref{tab:Beta values of mutual funds studied in our approach}.

	\begin{table}
		\tiny
		\centering
		\begin{tabular}{c|c|c|c|c|c|c|c}
			\hline
			                     & VFINX & SWPPX & FDCAX & VHCAX & VSEQX & VSTCX & LMPRX \\
			\hline                     
			$\beta$(2019) & 1.0   & 1.0   & 1.03  & 1.14  & 1.17  & 1.22  & 1.17 \\
            $\beta$(2020) & 1.0   & 1.0   & 1.01  & 1.13  & 1.30  & 1.38  & 1.07\\
			\hline
		\end{tabular}
		\caption{Beta values of mutual funds studied in our paper}
		\label{tab:Beta values of mutual funds studied in our approach}
	\end{table}	
    
    We sort the averaged risk-tolerance values of 7 mutual funds over time from small to large, and compare their orders with the order of beta values averaged from Table \ref{tab:Beta values of mutual funds studied in our approach}. In Figure \ref{figure:Comparison of Mutual Fund Risk Tolerance Values in Sector Space} (h), most of the scatter plots are close to diagonal, indicating that the order of estimated risk tolerance is consistent with the order of beta values. Therefore, risk indicators estimated by proposed methods are intuitively rational because they are consistent with common knowledge in the financial domain. 
	 
	\textbf{Readiness for automated and personalized portfolio construction} 
    Our risk tolerance parameters can be directly used as inputs in the mean-variance framework to drive automated portfolio advice, known as Robo-advising. Most existing Robo-advising systems construct risk preference parameters according to \emph{one-time interaction} with the client \citep{Capponi2019PersonalizedRE}, typically profiled based on client's financial objectives, investment horizons, and demographic characteristics. Other approaches assume the client knows her risk tolerance parameter at all times, but only communicates it to the Robo-advisor at specific updating times\citep{Capponi2019PersonalizedRE}. With our approach, we could anchor investor client's risk at an appropriate level compared to some benchmark portfolios. Our experiments show that the time-varying risk preference of two S\&P 500 mutual funds, which are considered a market benchmark portfolio, range from 0.2 to 0.5 in Sector Space. We also illustrate that some actively managed funds (VSEQX, VSTCX) have larger risk preference values than 0.5. For example, if the client is aware that S\&P 500 fund has risk tolerance $r=0.2$ and an active fund has risk tolerance $r=0.5$, she may prefer to have a personalized risk tolerance that is between those of the S\&P 500 fund and active fund at current market period. The Robo-advisor can directly suggest a personalized risk tolerance score $r=0.4$ and start with a personalized portfolio construction process. Later, when the market has changed, risk tolerance values of S\&P 500 fund and active funds may also change accordingly. The Robo-advisor should update the process with a new parameter to reflect the time-varying risk preference.

\section{Conclusions}
In this paper, we present a novel approach to learn risk preference parameters from portfolio allocations. We formulate our learning task in a generalized inverse optimization framework. We consider portfolio allocations as outcomes of an optimal decision-making process governed by mean-variance portfolio theory, thus risk preference parameter is a learnable decision variable through IPO. The proposed approach has been validated and exhibited in two real problems that consists of a 18-month of daily robotic investment portfolios and ten-year history of quarterly portfolio holdings collected in seven mutual funds. The proposed method can be applied to individual investors' portfolios to understand their risk profiles in personalized investment advice. It could be deployed as real-time reference indices in Robo-advising systems that facilitate automation and personalization.

	
\bibliographystyle{ACM-Reference-Format}
\bibliography{sample-base}

\clearpage
\newpage

\section{Supplementary Material A: Learning Risk in Sector and Factor Space}\label{section:Investment Risk Tolerance Learning}
 In this section, we introduce reformulations of \ref{kkt reformulation} for sector-based and factor-based risk preference learning in details.  
\subsection{Sector Space}
\subsubsection{Forward Problem}
Similar to \ref{mean-variance portfolio}, we consider the following mean-variance portfolio optimization problem in sector space:
\begin{align}
\label{forward problem of po}
\tag*{PO-Sector}
\begin{array}{llll}
     \min\limits_{\mathbf{x}} &  \frac{1}{2} \mathbf{x}_{s}^{T} Q_{s} \mathbf{x}_{s} - r_{s} \mathbf{c}_{s}^{T} \mathbf{x}_{s}  \\
     \; s.t.  &  A_{s} \mathbf{x}_{s} \geq  \mathbf{b}_{s},
\end{array}
\end{align}
 where $Q_{s}$ is the sector level variance-covariance matrix, $\mathbf{x}_{s}$ is the portfolio allocation aggregated in sector space, $\mathbf{c}_{s}$ is the expected sector level portfolio return, $r_{s}$ is the risk tolerance factor (in sector space), $A_{s}$ is the structured constraint matrix, and $\mathbf{b}_{s}$ is the corresponding constraint value. 
	 
	 Similar to Section 4.2.5, time-varying $Q_{s,t}$ and $\mathbf{c}_{s,t}$ are defined as follows:
	\begin{align}
	\begin{array}{llll}
 	Q_{s,t} & = \textrm{cov}([\mathbf{p}_{s,l},...,\mathbf{p}_{s,t}],[\mathbf{p}_{s,l},...,\mathbf{p}_{s,t}]), \\
 	\mathbf{p}_{s,t} & = [\hat{p}_{i,t}]_{i \in \mathcal{S}_{i}} \\
	\hat{p}_{i,t} & = \sum\limits_{j \in \mathcal{S}_{i}} w_{j,t} p_{j,t}, \\
	c_{i,t} & = \textrm{mean} ([\hat{p}_{i,t-w},...,\hat{p}_{i,t}]), \\
	\mathbf{c}_{s,t} & = \{c_{1,t},...,c_{i,t},...,c_{|\mathcal{S}|,t}\},
	\end{array}
	\end{align}
	 where $\mathbf{p_{s,t}}$ is vector of sector-level returns, $\mathcal{S}_{i}$ is the set of assets in $i$-sector, $w_{j,t}$ is asset holding weights belonging to $\mathcal{S}_{i}$, and $\hat{p}_{i,t}$ is time-varying sector monthly profit calculated by weighted sum of asset level profit in that sector. 

	\subsubsection{Inverse Problem}\label{section: ipo in sector space}
	The online inverse problem of learning sector-level risk tolerance $r_{s}$ given $\mathbf{y}_{s,t}, Q_{s,t}, \mathbf{c}_{s,t}$ is formulated as 
	\begin{align}
	\label{full formulation of ipo}
	\tag*{IPO-Sector}
	\begin{array}{llll}
	    \min\limits_{{r}_{s}, \mathbf{x}_{s} \in \Theta} &\frac{1}{2}\|{r}_{s}-r_{s,t}\|^2 + \eta_t\| \mathbf{y}_{s,t} - \mathbf{x}_{s}\|^2\\
    	\; s.t. \quad & A_{s} \mathbf{x}_{s}\geq \mathbf{b}_{s}, \\
    	\quad & \mathbf{u} \leq M\mathbf{z}, \\
    	\quad & A_{s} \mathbf{x}_{s} - \mathbf{b}_{s} \leq M(1-\mathbf{z}), \\
    	\quad & Q_{s,t}\mathbf{x}_{s} - r_{s} \mathbf{c}_{s,t} - A_{s}^T \mathbf{u} = 0, \\
    	\quad & \mathbf{x}_{s} \in \mathbb{R}^{m} , \mathbf{u} \in \mathbb{R}_{+}^{m}, \mathbf{z} \in \{0, 1\}^{m}. 
	\end{array}
	\end{align}
	The observed portfolio $\mathbf{y}_{s,t}$ is now aggregated to sector level.
	
	\subsubsection{Sector Risk Tolerance Learning Algorithm}
	
	 Algorithm \ref{algorithm: OL} illustrates the process of applying online IOP to learn risk tolerance using sector level data. The number of iteration $T$ varies by observed portfolio. In the mutual fund case study, at $\mathbf{y}_{20}$, we have 20 historical quarterly portfolios so $T=20$. $T$ increases each quarter to $T=44$ at December 2020.
	
	\begin{algorithm}
		\caption{Risk Tolerance Learning from Sector Portfolio}
		\label{algorithm: OL}
		\textbf{Input:} (time-series portfolio and price data) $ {Y}_{t}, P_{t}$\\
		\textbf{Initialization:} $r_{0}$ (guess), $\lambda$, $M$ (hyper-parameter)   
		\begin{algorithmic}[1]
			\For{$t=1$ to $T$}  
			\State receive ($Y_{t}, P_{t}$)   
			\State $(\mathbf{y}_{s,t}, \mathbf{p}_{s,t}) \gets (\mathbf{y}_{t}, \mathbf{p}_{t})$
			\State $(Q_{s,t}, \mathbf{c}_{s,t}) \gets \mathbf{p}_{s,t}$    
			\State $\eta_{t} \gets \lambda t^{-1/2}$  \Comment{get updated $\eta_{t}$ by decaying factor}
			\State Solve $r_{s,t}$ as \ref{full formulation of ipo} in Section \ref{section: ipo in sector space}
			\State $r_{s,t+1} \gets r_{s,t}$ \Comment{update new guess of $r$}
			\EndFor
		\end{algorithmic}
		\textbf{Output:} Estimated $r_{s}$ 
	\end{algorithm}

	\subsubsection{Hyper-parameter and constraints}
	 Solving the inverse problem in \ref{full formulation of ipo} requires hyper-parameters $M$ and $\eta_{t}$. The constraints $A_{s} \mathbf{x}_{s}\geq \mathbf{b}_{s}$ correspond to
	\begin{align}
	& 0 \leq x_{i} \leq 1, \ \  i \in [11], \\
	& \sum_{i=1}^{11} x_{i} = 1,
	\end{align}
 	 as we do not consider short-selling positions. 
Specifically, the structural constraint coefficient matrix $A_{s}$ is a $22 \times 11$ matrix, and $\mathbf{b}_{s}$ is a $22 \times 1$ vector:
	\begin{align}
	\label{math: A and b}
	A_{s} = \begin{bmatrix}
	1 & & & &     \\
	& 1 & & &     \\
	& & \ddots & &\\
	& & & 1 &     \\
	& & & & 1   \\
	-1 & & & &     \\
	& -1 & & &     \\
	& & \ddots & & \\
	& & &  -1 &     \\
	& & & &  -1   \\
	1 & 1 & \hdots & & 1   \\
	-1 & -1 & \hdots & & -1   \\
	\end{bmatrix},  \ \ & 
	\mathbf{b}_{s} = \begin{bmatrix}
	0 \\
	0 \\
	\vdots \\
	0 \\
	0 \\
	-1 \\
	-1 \\
	\vdots \\
	-1 \\
	-1 \\
	1 \\
	-1 
	\end{bmatrix}.
	\end{align}
	
	\subsection{Factor Space}
	\subsubsection{Forward Problem}
	 Without abuse of notation, we refer to the asset-level covariance matrix as $Q$, and the asset-level return as $\mathbf{c}$. We perform eigendecomposition of the asset-level covariance matrix $Q$ such that 
	\begin{align}
	\label{math:eigendecomposition}
	 Q \approx F \Sigma F^T,
	\end{align}
	 where $\Sigma$ is a $K \times K$ diagonal matrix of largest $K$ eigenvalues, $F$ is a $N \times K$ matrix of eigenvectors. 
	 The relationships between asset-level allocation $\mathbf{x}$ and factor-level allocation $\mathbf{x}_{f}$ are: 
	\begin{align}
	\label{math:factor-level allocation 1}
	F^{T}\mathbf{x} = \mathbf{x}_f,  \\
	\label{math:factor-level allocation}
	F^{T}\mathbf{c} = \mathbf{c}_f,
	\end{align}
	where \eqref{math:factor-level allocation 1} is projecting asset allocations to factor space, and \eqref{math:factor-level allocation} follows because $\mathbf{x}_f$ is a linear combinations of allocations in $\mathbf{x}$, thus the expectation of returns follows the same linear combination $E[\mu_{i} x_{i}]=\mu_{i}E[x_{i}]$. 
	 
	 \begin{proposition}\label{proposition: factor space po}
	 The forward problem in factor space is 
    \begin{align}
	\label{forward problem of po in factor space}
	\tag*{PO-Factor}
	\begin{array}{llll}
	     \min\limits_{\mathbf{x}_{f}} & \  \frac{1}{2} \mathbf{x}_f^T\Sigma \mathbf{x}_f - r_{f} \mathbf{c}_{f}^T \mathbf{x}_f\\
         \; s.t. \quad & AF \mathbf{x}_f\geq \mathbf{b}, 
	\end{array}
	\end{align}
	 where $\mathbf{c}_{f}=F^{T}\mathbf{c}$. 
	 \end{proposition}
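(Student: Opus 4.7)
The plan is to derive \ref{forward problem of po in factor space} by pushing the change of variables implied by \eqref{math:eigendecomposition}--\eqref{math:factor-level allocation} through each term of the asset-level forward problem \ref{mean-variance portfolio}, and verifying that the reduced problem inherits convexity and feasibility from the original. Since the proposition amounts to a substitution, I would present the argument as three aligned rewritings (quadratic term, linear term, constraints) followed by a brief remark on well-posedness.

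First I would rewrite the quadratic term exactly under the stated eigendecomposition: substituting $Q \approx F\Sigma F^T$ yields $\mathbf{x}^T Q \mathbf{x} \approx (F^T\mathbf{x})^T\Sigma(F^T\mathbf{x}) = \mathbf{x}_f^T \Sigma \mathbf{x}_f$ by the definition of $\mathbf{x}_f$ in \eqref{math:factor-level allocation 1}. This step requires no invertibility of $F$ and carries over as-is. Next I would handle the linear term by projecting $\mathbf{x}$ onto the span of the retained factors: writing $\mathbf{x} = FF^T\mathbf{x} + (I - FF^T)\mathbf{x}$ and absorbing the orthogonal complement into the factor-model approximation gives $\mathbf{c}^T\mathbf{x} \approx \mathbf{c}^T F F^T \mathbf{x} = (F^T\mathbf{c})^T(F^T\mathbf{x}) = \mathbf{c}_f^T \mathbf{x}_f$, using both \eqref{math:factor-level allocation 1} and \eqref{math:factor-level allocation}. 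Combining the two substitutions produces exactly the objective of \ref{forward problem of po in factor space}.

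For the feasibility region, I would exploit the same projection: the implicit representation $\mathbf{x} \approx F\mathbf{x}_f$ (valid when $\mathbf{x}$ lies in the column span of $F$) converts $A\mathbf{x}\geq \mathbf{b}$ into $AF\mathbf{x}_f \geq \mathbf{b}$, matching the stated constraint. I would then remark that because $\Sigma$ is the diagonal matrix of the top $K$ eigenvalues of $Q\succeq 0$, all of its entries are nonnegative and so the reduced quadratic form is convex; together with the linear constraint this makes \ref{forward problem of po in factor space} a convex QP, preserving the structural assumptions used throughout the paper.

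The main obstacle is the approximation hidden in the two appearances of $\approx$: because only the top $K$ eigenvectors are retained, $F \in \mathbb{R}^{N\times K}$ is tall and $FF^T$ is a rank-$K$ projector rather than the identity, so the identifications $\mathbf{x}\leftrightarrow F\mathbf{x}_f$ and $\mathbf{c}^T\mathbf{x}\leftrightarrow \mathbf{c}_f^T\mathbf{x}_f$ are not exact on all of $\mathbb{R}^N$. I would therefore explicitly frame the proposition as a reduction that becomes exact in the limit $K\to N$ and is otherwise controlled by the discarded tail $\sum_{k>K}\sigma_k$ (the same quantity governed by the hyperparameter $\varepsilon$ in Section \ref{section: Forward-Inverse Validation}). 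Making this dependence visible is the delicate part; the remaining algebra is the substitution described above.
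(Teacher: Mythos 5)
Your argument is correct and takes essentially the same route as the paper: substitute $Q \approx F\Sigma F^{T}$ and $\mathbf{x}_f = F^{T}\mathbf{x}$ to rewrite the quadratic term exactly, then observe that the linear term reduces to $\mathbf{c}^{T}FF^{T}\mathbf{x}$, where $FF^{T}$ is a rank-$K$ projector rather than the identity unless $K=N$, so $r_f$ is only an approximation of $r$ when $K<N$ (your explicit split $\mathbf{x}=FF^{T}\mathbf{x}+(I-FF^{T})\mathbf{x}$ and the constraint/convexity remarks are harmless additions beyond the paper's proof, which treats only the objective). One small correction: the hyperparameter $\varepsilon$ in the paper is a relaxation constant placed in the constraint vector $\mathbf{b}_{f}$ of the factor-space inverse problem to keep the dense constraints $AF_{t}\mathbf{x}_f$ feasible, not a quantity that controls the discarded eigenvalue tail $\sum_{k>K}\sigma_{k}$, so your closing remark misattributes its role.
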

	 \begin{proof}
    	 Applying \eqref{math:eigendecomposition} and \eqref{math:factor-level allocation} to the objective function in the first term of \ref{mean-variance portfolio} yields
    	\begin{align}
    	\label{math: replace objective function by decompostion}
    	\begin{array}{llll}
    	     &\frac{1}{2} \mathbf{x}^{T} Q  \mathbf{x} \\    	
        	= & \frac{1}{2} \mathbf{x}^T F \Sigma F^{T} \mathbf{x}  \\
        	= & \frac{1}{2} \mathbf{x}_f^T \Sigma \mathbf{x}_f ,
    	\end{array}
    	\end{align}
    	Then, for the new projected portfolio $\mathbf{x}_{f}$, the second term of \ref{mean-variance portfolio} can be rewritten as the new term $r_{f} \mathbf{c}_{f}^{T} \mathbf{x}_{f}$.
    	Notice that,
    	\begin{align}
    	\label{math: error of decompostion}
    	\begin{array}{llll}
    	    & r_{f} \mathbf{c}_{f}^{T} \mathbf{x}_{f} \\
    	    = & r_{f} (F^{T} \mathbf{c})^{T} F^{T} \mathbf{x} \\
    	    = & r_{f} c^{T} FF^{T} \mathbf{x} ,
    	\end{array}
    	\end{align}
    	where $FF^{T}$ is an Identity matrix only when $F$ contains all the eigenvectors ($K=N$), and then the new objective in \ref{forward problem of po in factor space} is equivalent to \ref{mean-variance portfolio} and $r_{f}$ is the same as $r$. However, if $K<N$, $FF^{T}$ is not identity matrix, so $r_{f}$ is also an approximation of $r$ in factor space.   
	 \end{proof}

	 In our experiments, $Q$ is a 2236 $\times$ 2236 covariance matrix of all underlying assets included in seven mutual funds. All mutual fund holdings are represented by dimensional vector $\mathbf{y}$, where non-holding assets are 0. The number of factors K in all our experiments is set to 5. 
	
	\subsubsection{Inverse Problem}\label{section:iop for factor space}
	 The structure of the inverse optimization problem in factor space is similar to that in asset space and sector space, and the main difference occurs on the constraint matrix $AF$. Since $Q$ changes over time, the decomposed $\Sigma$ and $F$ also are variables depending on $t$. 
	 \begin{theorem}
	 The inverse optimization problem in factor space is
	 \begin{align}
	\label{full formulation of ipo: factor space}
	\tag*{IPO-Factor}
	\begin{array}{llll}
	     \min\limits_{r_{f}, \mathbf{x}_{f} \in \Theta} &\frac{1}{2}\|{r}_{f} - r_{f,t} \|^2 + \eta_t\| \mathbf{y}_{f,t} - \mathbf{x}_{f}\|^2\\
    	\; s.t. \quad & AF_{t} \mathbf{x}\geq \mathbf{b}_{f},  \\
    	\quad & \mathbf{u} \leq M\mathbf{z},  \\
    	\quad & AF_{t} \mathbf{x}_{f} - \mathbf{b}_{f} \leq M(1-\mathbf{z}), \\
    	\quad & \Sigma_t \mathbf{x}_{f} - r_f \mathbf{c}_{f,t} - (AF_{t})^T\mathbf{u} = 0,  \\
    	\quad & \mathbf{c}_{f,t} \in \mathbb{R}^{K}, \mathbf{x}_{f} \in \mathbb{R}^{K} , \mathbf{u} \in \mathbb{R}_{+}^{K}, \mathbf{z} \in \{0, 1\}^{K}.    
	\end{array}
	\end{align}
	 \end{theorem}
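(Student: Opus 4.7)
My plan is to mirror the derivation that produced IPO in the asset-level setting, adapting it to the forward problem PO-Factor established in Proposition 5.3. That derivation proceeded in three stages: first invoke Proposition 3.1 (KKT) to characterize the optimal-solution set of the forward problem as a system of primal/dual equations; second apply Proposition 3.2 to linearize the complementary-slackness product using binary indicators and a big-$M$ constant; and third substitute this characterization into the online update rule \eqref{update rule} with the loss \eqref{loss function for PO}. In factor space the only substitutions are the Hessian ($\Sigma_t$ in place of $Q_t$), the linear coefficient ($\mathbf{c}_{f,t}$ in place of $\mathbf{c}_t$), the constraint matrix ($AF_t$ in place of $A$), and the observation ($\mathbf{y}_{f,t}$ in place of $\mathbf{y}_t$); the structural form of the MISOCP is unchanged.

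Concretely, I would first verify that PO-Factor satisfies the hypotheses of Proposition 3.1. The matrix $\Sigma_t$ is diagonal with the top-$K$ eigenvalues of the PSD matrix $Q_t$, so $\Sigma_t \succeq 0$ and the objective is convex in $\mathbf{x}_f$; a Slater point is inherited from the asset-space problem by setting $\mathbf{x}_f = F_t^T \mathbf{x}$. Hence the KKT conditions
\begin{align*}
AF_t \mathbf{x}_f \geq \mathbf{b}_f,\quad \mathbf{u}\geq 0,\quad \mathbf{u}^T(AF_t\mathbf{x}_f - \mathbf{b}_f)=0,\quad \Sigma_t\mathbf{x}_f - r_f\mathbf{c}_{f,t} - (AF_t)^T\mathbf{u}=0
\end{align*}
are both necessary and sufficient to describe $S(\Sigma_t,\mathbf{c}_{f,t},AF_t,\mathbf{b}_f;r_f)$. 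Applying Proposition 3.2 to the complementarity product replaces $\mathbf{u}^T(AF_t\mathbf{x}_f - \mathbf{b}_f)=0$ by the linear disjunctive pair $\mathbf{u}\leq M\mathbf{z}$ and $AF_t\mathbf{x}_f - \mathbf{b}_f \leq M(\mathbf{1}-\mathbf{z})$ with $\mathbf{z}\in\{0,1\}$. Finally, I plug the linearized set into the loss \eqref{loss function for PO}, collapse the outer min over $r_f$ and the inner min over the projection $\mathbf{x}_f\in S(\cdot;r_f)$ into a single joint minimization over $(r_f,\mathbf{x}_f,\mathbf{u},\mathbf{z})$, and copy the proximal term from \eqref{update rule}. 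This yields exactly IPO-Factor.

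The step I expect to require the most care is interpreting the objective correctly under the rank-$K$ truncation. When $K<N$, Proposition 5.3 already notes that $FF_t^T\neq I$, so $r_f$ is only an approximation of the asset-space $r$; the derivation above is valid only when KKT is applied to the \emph{exact} factor-space objective $\tfrac{1}{2}\mathbf{x}_f^T\Sigma_t\mathbf{x}_f - r_f\mathbf{c}_{f,t}^T\mathbf{x}_f$ rather than to its asset-space pullback, otherwise the stationarity equation would pick up a residual $F_t F_t^T$ term. Once that is fixed, the time-varying data $(\Sigma_t,F_t,\mathbf{c}_{f,t})$ simply indexes the per-round constants and does not affect any of the convex-analytic arguments, so the remaining algebra is routine bookkeeping.
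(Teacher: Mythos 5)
Your proposal is correct and follows essentially the same route as the paper, which proves the theorem simply by applying the factor-space forward problem (PO-Factor) together with the KKT characterization and big-$M$ linearization, then substituting into the online update rule. Your additional care about $\Sigma_t \succeq 0$, Slater's condition, and applying KKT to the exact truncated objective (rather than its asset-space pullback when $K<N$) fills in details the paper leaves implicit but does not change the argument.
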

	 \begin{proof}
	 The theorem is a direct result of applying Proposition \ref{proposition: factor space po} and KKT conditions.
	 \end{proof}

    \begin{remark}
     The new constraint matrix $AF_{t}$ in \ref{full formulation of ipo: factor space} is no longer a sparse structural matrix as it had in \ref{math: A and b}. $AF_{t}$ now becomes a dense matrix:
    	\begin{align}
    	AF_{t} = \begin{bmatrix}
    	F_{t}^{T} \\
    	-F_{t}^{T} \\
    	\sum_{j=1}^{n} f_{j,t}  \\
    	-\sum_{j=1}^{n} f_{j,t}   \\
    	\end{bmatrix},  \ \ & 
    	\mathbf{b}_{f} = \begin{bmatrix}
    	\varepsilon{\mathbf{1}} \\
    	-{\mathbf{1}} \\
    	1 \\
    	-1 
    	\end{bmatrix}.
    	\end{align}
    	 where $F_{t}$ is the $N \times K$ eigenvector matrix obtained in \eqref{math:eigendecomposition}, and $f_{j,t}$ are its $j$-th column. Notice that the first vector in $\mathbf{b}_{f}$ has changed from vector of all zeros to a vector of constant value $\varepsilon$. The reason is that solving \ref{full formulation of ipo: factor space} with a dense constraint matrix $AF_{t}$ is very challenging, and the linear equations in $AF_{t}$ can pose conflicting constraints and cause the problem infeasible. In practice, we find that relaxing one side of constraints from $0 \leq AF_{t}\mathbf{x} \leq 1$ to $\varepsilon \leq AF_{t}\mathbf{x} \leq 1$, where $\varepsilon$ is a small positive constant, can make the problem much easier to solve. Under this factor space setting, the allocation weights $\mathbf{x}_{f}$ on factors can be positive or negative values whose absolute values are larger than 1. When re-projecting them to the original asset space $F_{t}\mathbf{x}_{f}$ as \eqref{math:factor-level allocation}, however, all weights are still between $\varepsilon$ and 1, and their sum is equal to 1.	
    \end{remark}

	\subsubsection{Hyper-parameter and constraints}
	
	 Solving factor space inverse problem requires three hyper-parameters: $M$, $\lambda$ and $\varepsilon$. The number of eigenfactor $K$, in our paper, is set to 5 in all experiments. For dense constraint matrix $AF_{t}$, selecting larger $K$ can make the problem very computational challenging, hence not recommended. But, if the factor coefficients assigned on asset level are sparse and consequently the constraint matrix $AF_{t}$ is also sparse, the proposed method can still learn from large number of factors efficiently.  
	
	\subsubsection{Factor Risk Tolerance Learning Algorithm}
	 The learning algorithm in factor space is illustrated in Algorithm \ref{algorithm: OLF}. In sector space each mutual fund has different sector-based holdings, so $Q_{s,t}, \mathbf{c}_{s,t}$ care calculated separately for each mutual fund. In factor space, at each observation point $t$, $Q_{a,t}, c_{a,t}, F_{t}, \Sigma_{t}$ are the same for all mutual funds, and $\mathbf{c}_{f,t}$ and $\mathbf{y}_{f,t}$ are different inputs from funds.
	 
	\begin{algorithm}
		\caption{Risk Tolerance Learning from Factor Portfolio}
		\label{algorithm: OLF}
		\textbf{Input:} (time-series portfolio and price data) $ Y_{t}, P_{t}$\\
		\textbf{Initialization:} $r_{0}$ (guess), $\lambda$, $M$, $\varepsilon$ (hyper-para)    
		\begin{algorithmic}[1]
			\For{$t=1$ to $T$}  
			\State receive ($Y_{t}, P_{t}$)   
			\State $(Q_{a,t}, \mathbf{c}_{a,t}) \gets \mathbf{p}_{a,t}$   
			\State $(F_{t}, \Sigma_{t}) \gets Q_{a,t}$  \Comment eigendecomposition
			\State $(\mathbf{c}_{f,t}) \gets F_{t}^{T}\mathbf{c}_{a,t}$ 
			\State $(\mathbf{y}_{f,t}) \gets F_{t}^{T}\mathbf{y}_{a,t}$ 
			\State $\eta_{t} \gets \lambda t^{-1/2}$  \Comment{get updated $\eta_{t}$ by decaying factor}
			\State Solve $r_{f,t}$ as equation \ref{full formulation of ipo: factor space} in Section \ref{section:iop for factor space}
			\State $r_{f,t+1} \gets r_{f,t}$ \Comment{update guess of $r_{f}$}
			\EndFor
		\end{algorithmic}
		\textbf{Output:} Estimated $r_{f}$ 
	\end{algorithm}

\begin{table*}    
\begin{tabular}{l|l|l}
\hline 
& Sector Space & Factor Space \\
\hline
\textbf{Forward Problem} &  $	\begin{aligned}
	\label{forward problem of po}
	\begin{array}{llll}
	     \min\limits_{\mathbf{x}} &  \frac{1}{2} \mathbf{x}_{s}^{T} Q_{s} \mathbf{x}_{s} - r_{s} \mathbf{c}_{s}^{T} \mathbf{x}_{s}  \\
         \; s.t.  &  A_{s} \mathbf{x}_{s} \geq  \mathbf{b}_{s},
	\end{array}
	\end{aligned}$ & 
 $\begin{aligned}
	\label{forward problem of po in factor space table}
	\begin{array}{llll}
	     \min\limits_{\mathbf{x}_{f}} & \  \frac{1}{2} (\mathbf{x}_f^T\Sigma \mathbf{x}_f + \mathbf{x}_a^T Q_{d} \mathbf{x}_a)  - r_{f} \mathbf{c}_{f}^T \mathbf{x}_f\\
         \; s.t. \quad & AF \mathbf{x}_f\geq \mathbf{b}_f, 
	\end{array}
	\end{aligned}	$
	\\
\hline	
\textbf{Inverse problem} & $	\begin{aligned}
	\begin{array}{llll}
	    \min\limits_{{r}_{s}, \mathbf{x}_{s}} &\frac{1}{2}\|{r}_{s}-r_{s,t}\|^2 + \frac{\lambda}{\sqrt{t}} \| \mathbf{y}_{s,t} - \mathbf{x}_{s}\|^2\\
    	\; s.t. \quad & A_{s} \mathbf{x}_{s}\geq \mathbf{b}_{s}, \\
    	\quad & \mathbf{u} \leq M\mathbf{z}, \\
    	\quad & A_{s} \mathbf{x}_{s} - \mathbf{b}_{s} \leq M(1-\mathbf{z}), \\
    	\quad & Q_{s,t}\mathbf{x}_{s} - r_{s} \mathbf{c}_{s,t} - A_{s}^T \mathbf{u} = 0, \\
    	\quad & \mathbf{x}_{s} \in \mathbb{R}^{m} , \mathbf{u} \in \mathbb{R}_{+}^{m}, \mathbf{z} \in \{0, 1\}^{m}. 
	\end{array}
	\end{aligned}$	
	&
$ \begin{aligned}
	\begin{array}{llll}
	     \min\limits_{r_{f}, \mathbf{x}_{f}} &\frac{1}{2}\|{r}_{f} - r_{f,t} \|^2 +  \frac{\lambda}{\sqrt{t}} \| \mathbf{y}_{f,t} - \mathbf{x}_{f}\|^2\\
    	\; s.t. \quad & AF_{t} \mathbf{x}\geq \mathbf{b}_{f},  \\
    	\quad & \mathbf{u} \leq M\mathbf{z},  \\
    	\quad & AF_{t} \mathbf{x}_{f} - \mathbf{b}_{f} \leq M(1-\mathbf{z}), \\
    	\quad & \Sigma_t \mathbf{x}_{f} - r_f \mathbf{c}_{f,t} - (AF_{t})^T\mathbf{u} = 0,  \\
    	\quad & \mathbf{c}_{f,t} \in \mathbb{R}^{K}, \mathbf{x}_{f} \in \mathbb{R}^{K} , \mathbf{u} \in \mathbb{R}_{+}^{K}, \mathbf{z} \in \{0, 1\}^{K}.    
	\end{array}
	\end{aligned}	$ \\
	\hline
\textbf{Constraints} & 
$\begin{aligned}
	\label{math: A and b}
	A_{s} = \begin{bmatrix}
    I \\
    -I \\
	\mathbf{1}   \\
	-\mathbf{1}   \\
	\end{bmatrix},  \ \ & 
	\mathbf{b}_{s} = \begin{bmatrix}
    \mathbf{0} \\
    -\mathbf{1} \\
	1 \\
	-1 
	\end{bmatrix}.
	\end{aligned}$
&       $\begin{aligned}
    	AF_{t} = \begin{bmatrix}
    	F_{t}^{T} \\
    	-F_{t}^{T} \\
    	\sum_{j=1}^{n} f_{j,t}  \\
    	-\sum_{j=1}^{n} f_{j,t}   \\
    	\end{bmatrix},  \ \ & 
    	\mathbf{b}_{f} = \begin{bmatrix}
    	\varepsilon{\mathbf{1}} \\
    	-{\mathbf{1}} \\
    	1 \\
    	-1 
    	\end{bmatrix}.
    	\end{aligned}$ \\
    	\hline
\textbf{Hyper-parameters} &  $\lambda, M$ 	&  $\lambda, M, \varepsilon$ \\ 
\hline
\end{tabular}
\caption{Problem settings of risk preference learning in Sector and Factor Space.}
\label{Problem settings of risk preference learning in Sector and Factor Space.}
\end{table*}

\subsection{IPO in Sector Space and Factor Space}

\noindent In Table 2 we highlight the formulations of forward problem and inverse problem for IPO in sector space and factor space.

\subsection{Equivalence of Mean-Variance Portfolio Problem Formulations}\label{equivalence}	In mean-variance theory, the risk-tolerance parameter and the expected portfolio return parameter are closely connected and both can be used independently to trace out the mean-variance efficient frontier. Indeed, in the unconstrained scenario, the solution to the mean-variance problem 
\begin{equation}\label{version1}
\min\limits_{\mathbf{x} \in \mathbb{R}^{n}}  \frac{1}{2} \mathbf{x}^{T} Q \mathbf{x} - r \mathbf{c}^{T} \mathbf{x}
\end{equation}
can be derived by matrix differentiation, i.e., the optimal allocation weights satisfy
$$Q\mathbf{x}-r\mathbf{c}=0,$$
which in turns gives $\mathbf{x}^*=rQ^{-1}\mathbf{c}$, assuming that $Q$ is invertible. It hence follows that the targeted return $z$ satisfies
\begin{equation}\label{connection}
z=\mathbf{c}^T\mathbf{x} = r\mathbf{c}^TQ^{-1}\mathbf{c}.
\end{equation}
The solution $\mathbf{x}^*$ to (\ref{version1}) can be actually recovered by solving the following equivalent mean-variance problem 
\begin{align}\label{version2}
    \begin{array}{llll}
         \min\limits_{\mathbf{x} \in \mathbb{R}^{n}} &  \frac{1}{2} \mathbf{x}^{T} Q \mathbf{x}  \\
    	\;s.t. & \mathbf{c}^{T}\mathbf{x}=z.
    \end{array}
\end{align}
It is obvious from (\ref{connection}) that there is a one-to-one correspondence between risk aversion / tolerance parameter and the targeted return level $z$. Moreover, equation (\ref{connection}) exhibits a positive relationship between risk tolerance and targeted return level. Such a positive correlation has been observed for other variants of the mean-variance problem with constraints or in the multi-period setting (see, e.g., \cite{active}, \cite{multiperiod}). 

\subsection{Order of 11 Sectors and Example Portfolios}
\noindent We show the order of 11 sectors and the aggregated portfolios of all mutual funds at March 2020 in Table 3.

	\begin{table*}
		\centering
		\begin{tabular}{c|c|c|c|c|c|c|c}
			\hline
			Sector Name & VFINX & SWPPX & FDCAX & VHCAX & VSEQX & VSTCX & LMPRX \\
			\hline
			Basic Materials & 2.0585  & 2.0696  & 1.8866 & 0.00017 & 2.7600 & 3.3926 & 0.5647 \\
			Communication Services & 10.6418 & 10.6255 & 9.6793 & 3.7649 & 3.3698 & 2.1419 & 25.7736 \\
			Consumer Cyclical & 9.5455  & 9.4568 & 11.0045 & 8.5593 & 11.1654 & 8.2437 & 0.2692 \\
			Consumer Defensive & 8.0814  & 8.0692 & 4.6294 & 0.0489 & 3.2521 & 3.5085 & - \\
			Energy & 2.6315  & 2.6254 & 1.5087 & 1.3174 & 1.7545 & 1.9985 & 0.6166\\
			Financial Services & 13.8400 & 13.8892 & 9.6072 & 7.2355 & 13.2327 & 15.3046 & 0.9835 \\
			Healthcare & 15.1290 & 15.0931 & 14.9797 & 33.7680 & 14.0414 & 14.0102 & 32.0470\\
			Industrials   & 8.0863  & 8.0675  & 2.1276 & 10.4407 & 14.7876 & 14.7677 & 4.6743 \\
			Real Estate  & 2.9809  & 2.9807 & 3.1544 & 0 & 8.0989 & 8.0817 & - \\
			Technology & 21.4863 & 21.4337 & 27.1432  & 27.9279 & 18.2084 & 13.3433 & 29.4258 \\
			Utilities  & 3.5461  & 3.5375 & 0.9650 & 0 & 4.8055 & 3.8035 & - \\
			\hline
		\end{tabular}
		\caption{Sector level holdings (percentage) of Mutual Funds reported by March 2020}
		\label{tab:Sector level holdings (percentage) of Mutual Funds}
	\end{table*}
			
 \subsection{Computational Efficiency}
 
Comparison of computational time solving a single iteration of \ref{full formulation of ipo} or \ref{full formulation of ipo: factor space}. Each CPU time value shown in the table is the average number of 10 iterations consist of randomly sampled matrices. The performance is evaluated on a workstation equipped with AMD Ryzen Threadripper 1950X 16-Core and 64G memory. Notice that in reality we don't have 500 sectors, but the complexity of solving a 500 sector problem is the same as solving a formulation of 500 asset problem, so we mainly compare asset/sector based formulation vs. factor based formulation.
 
	\begin{table*}[th]
	\centering
	\begin{tabular}{c|c|c}
		\hline
		Formulation & Problem Size & CPU-time(seconds) \\
		\hline
		\multirow{ 5}{*}{Sector IPO} & 100 sectors &  0.466 \\
		 & 200 sectors & 5.908 \\
		 & 300 sectors & 32.396 \\
		 & 400 sectors & 46.380 \\
		 & 500 sectors & 79.485 \\
		 \hline
		\multirow{12}{*}{Factor IPO} & 400 assets 5 factors  & 3.524 \\
		& 400 assets 6 factors & 23.429 \\
		& 400 assets 7 factors & 166.250 \\
		& 400 assets 8 factors & 283.050 \\
		& 400 assets 9 factors & 293.150 \\
		& 400 assets 10 factors & 297.826 \\
        & 500 assets 5 factors  & 7.365 \\
		& 500 assets 6 factors & 40.524 \\
		& 500 assets 7 factors & 221.100 \\
		& 500 assets 8 factors & 228.550 \\
		& 500 assets 9 factors & 300.350 \\
		& 500 assets 10 factors & 300.672 \\
		\hline
	\end{tabular}
	\caption{CPU times of IPO algorithms with different sizes}
	\label{tab:CPU time compare}
	\end{table*}	    	

\clearpage
\newpage

\end{document}